
\documentclass[prodmode,acmec]{ec12-acmsmall}

% Metadata Information
\acmYear{2012}
\acmMonth{06}

\usepackage{amssymb}
\usepackage{amsmath}

\newtheorem{claim}[theorem]{Claim}

\title{On Bitcoin and Red Balloons}
%\author{
%  Moshe Babaioff \affil{Microsoft Research, Silicon Valley.\\
%  moshe@microsoft.com}
%   Shahar Dobzinski \affil{Department of Computer Science, Cornell
%University.\\ shahar@cs.cornell.edu}
%   Sigal Oren \affil{Department of Computer Science, Cornell University.\\ sigal@cs.cornell.edu}
%  Aviv Zohar \affil{Microsoft Research, Silicon Valley.\\ avivz@microsoft.com}}

\author{
  Moshe Babaioff \affil{Microsoft Research, Silicon Valley}
  Shahar Dobzinski \affil{Cornell University}
  Sigal Oren \affil{Cornell University}
  Aviv Zohar \affil{Microsoft Research, Silicon Valley}}

\begin{document}

%\setcounter{page}{0}
%\maketitle
%\thispagestyle{empty}

\category{J.4}{Social and Behavioral Sciences}{Economics}
\category{K.4.4}{Computers and Society}{Electronic Commerce}
\category{F.2.0}{Analysis of Algorithms and Problem Complexity}{General}

\terms{Algorithms, Theory, Economics}

 \keywords{Mechanism design; Information propagation; Sybil-proof mechanisms}

\begin{bottomstuff}
The second and third authors were partially supported by NSF grants IIS-0910664 and CCF-0910940. Some of the work was done while the second and third author we visiting Microsoft Research, Silicon Valley.

Addresses: Moshe Babaioff, Microsoft Research, Silicon Valley (moshe@microsoft.com) {and}
Shahar Dobzinski, Computer Science Department, Cornell University (shahar@cs.cornell.edu);
Sigal Oren, Computer Science Department, Cornell University (sigal@cs.cornell.edu);
Aviv Zohar, Microsoft Research, Silicon Valley (avivz@microsoft.com).
\end{bottomstuff}

\begin{abstract}
Many large decentralized systems rely on information propagation to ensure
their proper function.
% However, it is common that participants that are aware of the information compete for some reward, and thus have an incentive {\em not} to propagate information.
We examine a common scenario in which only participants that are aware of the
information can compete for some reward, and thus informed participants have
an incentive {\em not} to propagate information to others. One recent example
in which such tension arises is the 2009 DARPA Network Challenge (finding red
balloons). We focus on another prominent example: Bitcoin, a decentralized
electronic currency system.

Bitcoin represents a radical new approach to monetary systems. It has been
getting a large amount of public attention over the last year, both in policy
discussions and in the popular press \cite{NY11,technology-review}. Its
cryptographic fundamentals have largely held up even as its usage has become
increasingly widespread. We find, however, that it exhibits a fundamental
problem of a different nature, based on how its incentives are structured. We
propose a modification to the protocol that can eliminate this problem.

Bitcoin relies on a peer-to-peer network to track transactions that are
performed with the currency. For this purpose, every transaction a node
learns about should be transmitted to its neighbors in the network. As the
protocol is currently defined and implemented, it does not provide an
incentive for nodes to broadcast transactions they are aware of. In fact, it
provides an incentive not to do so. Our solution is to augment the protocol
with a scheme that rewards information propagation. Since clones are easy to
create in the Bitcoin system, an important feature of our scheme is
Sybil-proofness.

We show that our proposed scheme succeeds in setting the correct incentives, that it is
Sybil-proof, and that it requires only a small payment overhead, all this is
achieved with iterated elimination of dominated strategies. We complement this result by showing that there are no reward schemes in which information propagation and no self-cloning is a dominant strategy.
\end{abstract}

\maketitle

\section{Introduction}

In 2009 DARPA announced the DARPA Network Challenge, in which participants
competed to find ten red weather balloons that were dispersed across the
United States~\cite{red_balloon}. Faced with the daunting task of locating
balloons spread across a wide geographical area, participating teams
attempted to recruit individuals from across the country to help. The winning
team from MIT~\cite{PPRCCMP11}, incentivized balloon hunters to seek balloons
by offering them rewards of $\$2000$ per balloon. Furthermore, after
recognizing that notifying individuals from all over the US about these
rewards is itself a difficult undertaking, the MIT team cleverly offered
additional rewards of $\$1000$ to the person who directly recruited a balloon
finder, a reward of $\$500$ to his recruiter, and so on. These additional
payments created the incentive for participants to spread the word about
MIT's offer of rewards and were instrumental in the team's success. In fact,
some additional rewards are necessary: each additional balloon hunter
competes with the participants in his vicinity, and reduces their chances of
getting the prize.

% Moshe 2.2: changes below. Now we know that MIT asked for SSN, so we cannot really claim MIT is open to the problem of Sybil attacks. I have replaced the footnote by another.
% OLD:
% MIT's scheme still requires further improvement.
The MIT scheme as described above is susceptible to the following attack.
A participant can
create a fake identity, invite the fake identity to participate, and use that
identity to recruit others.
%\footnote{In the actual implementation the MIT team has required participants to disclose their Social Security Number to receive payment. Such a solution has its costs (e.g. some people might not be able or refuse to participate) and is not feasible in every information propagation scenario.} By doing so he increases his prize by 50\%.
% \footnote{Indeed, we have no evidence of such attacks in the DARPA challenge. If no such attacks were made, one possible explanation is the short time span of the challenge and its non-commercial, scientific essence. It seems quite plausible that if the challenge is repeated several times such attacks will become common.}
Thus,
when participants can create a fake identities
the reward scheme should be carefully
designed so it does not create an incentive for such attacks. Our goal
is to design reward schemes that incentivize \emph{information propagation}
and counter the dis-incentive that arises from the competition with other
nodes, and are \emph{Sybil proof} (robust against creating clones, or Sybil attacks) while having
a \emph{low overhead} (a total reward that is not too high).

A related setting is a raffle, in which people purchase numbered tickets in
hopes of winning some luxurious prize. Each ticket has the same probability
of winning, and the prize is always allocated. As more tickets are sold, the
probability that a certain ticket will win decreases. In this case again,
there is a clear tension between the organizer of the raffle, who wants as
many people to find out about the raffle, and the participants who have
already purchased tickets and want to increase their individual chances of
winning. The lesson here is simple, to make raffles more successful
participants should be incentivized to spread the word. One example of a
raffle which is already implemented in this way is Expedia's ``FriendTrips'', in which the
more friends you recruit the bigger your probability of winning.

%We identify a potential incentives problem which might disrupt information
%propagation in the Bitcoin protocol, and propose a fix. This problem is not
%unique to Bitcoin, and arises in many other a-priori unrelated settings as
%well. To explain our setting and the problem in the Bitcoin protocol, we
%start with a simplified high level description of it (see Section
%\ref{sec-bitcoin} for a more detailed description).

As apparent from the previous examples, the tension between information
propagation and an increased competition is a general problem. We identify an
instantiation of this tension in the Bitcoin protocol, our main example for
the rest of the paper. Bitcoin is a decentralized electronic currency system
proposed by Satoshi Nakamoto in $2008$ as an alternative to current
government-backed currencies.\footnote{The real identity of Satoshi Nakamoto
remains a mystery. A recent \emph{New Yorker} article \cite{NY11}
%avivz: keep this citation, or just mention the article in the text?
attempts to identify him.} Bitcoin has been actively running since $2009$.
Its appeal lies in the ability to quickly transfer money over the internet,
and in its relatively low transaction fees.\footnote{There are additional
properties that some consider as benefits: Bitcoins are not controlled by any
government, and its supply will eventually be fixed. Additionally, it offers
some degree of anonymity (although this fact has been
contested~\cite{Reid11}).} As of February 2012, it has 8.2 million coins in
circulation (called \emph{Bitcoins}) which are traded at a value of
approximately 6 USD per bitcoin. Below, we give a brief explanation of the
Bitcoin protocol and then explain where the incentives problem appears. A more comprehensive description of the protocol is given in
Appendix~\ref{sec-bitcoin}.

Bitcoin relies on a peer-to-peer network to verify and authorize all
transactions that are performed with the currency. Suppose that Alice wants
to pay a sum of $30$ bitcoins for her stay in a hotel. Alice
cryptographically signs a transaction to transfer $30$ bitcoins from her to
the hotel, and sends the signed transaction to a small number of nodes in the
network. Each node in the network propagates the transaction to its
neighbors. A node that receives the transaction verifies that Alice has
signed it and that she does indeed own the bitcoins she is attempting to
transfer. The node then tries to ``authorize'' the transaction
% \footnote{In fact, a node authorizes a set of several transactions together (a ``block''). To keep the presentation simple, this simplified description considers only a single transaction. Our proposed solution applies to the case of multiple transactions as well.}
by attempting
to solve a computationally hard problem (basically inverting a hash
function). Once a node successfully authorizes a transaction, it sends the
``proof'' (the inverted hash) to all of its neighbors. They in turn, send the
information to all of their neighbors and so on. Finally, all nodes in the
network ``agree'' that Alice's bitcoins have been transferred to the hotel.

To motivate them to authorize transactions, nodes are offered a payment in
bitcoins for successful authorizations. The system is currently in its
initial stages, in which nodes are paid a predetermined amount of bitcoins
that are created ``out of thin air''. This also slowly builds up the supply
of bitcoins. But Bitcoin's protocol specifies an exponentially decreasing
rate of money creation that effectively sets a cap on the total number of
bitcoins that will be in circulation. As this payment to nodes is slowly
phased out, bitcoin owners that want their transactions approved are supposed
to pay fees to the authorizing nodes.

This is where the incentive problem manifests itself. A node in the network
has an incentive to keep the knowledge of any transaction that offers a fee
to itself, as any other node that becomes aware of the transaction will
compete to authorize first and claim the associated fee. For example, if only
a single node is aware of a transaction, it can eliminate competition
altogether by not distributing information further. With no competition, the
informed node will eventually succeed in authorizing and collect the fee. The
consequences of such behavior may be devastating: as only a single node in
the network works to authorize each transaction, authorization is expected to
take a very long time.\footnote{Bitcoin's difficulty level is adjusting
automatically to account for the total amount of computation in the network.
The expected time of a single machine to authorize a transaction is currently
on the order of $60$ days, instead of the $10$ minutes it is expected to take
if the entire network competes to authorize.}

% Moshe 4.2: many changes below
We stress that every change in the Bitcoin protocol {\em has} to take into
account Sybil attacks, as false identities are a prominent concern. Given
this concern the Bitcoin protocol
%was built under the assumption that nodes can create false identities, and
is designed to ensure that if a majority of {\em processing power}, rather than majority of declared identities, follow the protocol, it will not be possible to manipulate the history of authorized transactions.
% In fact, the Bitcoin protocol is built around the assumption that nodes can create false identities, and considers a transaction fully approved only once nodes that control a majority of the CPU power in the network have accepted it, rather than just a majority of the nodes.
Notice that simply using a majority of declared identities is vulnerable to Sybil attacks, as identities are easy to spoof.
Therefore any reward scheme for transaction propagation must also discourage Sybil attacks.

\subsubsection*{A Simplified Model}

We now present our model, using Bitcoin as our running example. For
simplicity assume that only one transaction is awaiting authorization. We
model the authorization process as divided into two phases: the first phase
is a distribution phase. The second one is a computation phase, in which
every node that has received the transaction is attempting to authorize it.

We start with describing the distribution network. We assume that the network consists of a forest of $d$-ary directed
trees, each of them of height $H$. The \emph{distribution phase} starts when the buyer sends the details of the transaction to
the $t$ roots of the trees (which we shall term \emph{seeds}). We think of
the trees as directed, since the information (about the transaction) flows
from the root towards the leaves. The {\em depth} of a node is the number of nodes on the path from the root of the node's tree (the seed) to the node. Let $n=t\cdot \frac {d^{H}-1} {d-1}$ be
the total number of nodes.

%In general, it is common to think of an efficient peer to peer network as a
%random graph, in which messages propagate quickly, multiplying the number of
%recipients as the message is sent to each additional layer. Since we are not
%able to solve for the general case of random graphs, we will further simplify
%that and assume that the network consists of a forest of $d$-ary directed
%trees, each of them of height $H$.\footnote{The intuition for this
%simplification is that when $d$ is small relatively to the number of nodes
%$n$, message propagation in a random graph resembles a tree. In some sense,
%% Moshe 2.2:
%% OLD: the case of trees is harder than general graphs as each node monopolizes the flow of information to its descendants.
%creating the right incentives is harder for trees than for general graphs, as each node in a tree monopolizes the flow of information to its descendants.}

In the distribution phase, each node $v$ can send the transaction to any of
its neighbors after adding any number of fake identities of itself before
sending to that neighbor.
% All $v$'s fake identities are connected to the same set of children.
Node $v$ can only relay the transaction to its original children (adding fake identities does not change the neighbors of $v$).
Naturally, a node can condition its behavior only on the information available to it: the \emph{length}
of the chain above it (which can possibly include false identities that were
produced by its ancestors).

% Moshe 4.2: major changes below to be more precise about nodes vs. identities:
In the \emph{computation phase} each node that is aware of the transaction tries to authorize it\footnote{In the Bitcoin protocol multiple transactions are placed at the same ''block'' and the cost of authorizing the entire block is fixed, essentially independent of the content of the block. Focusing on the issue of incentives to propagate information, in this paper we have
%neglected the cost of computation and have
assumed that every node tries to authorize any transaction it is aware of
(since the cost of adding it the block it is attempting to authorize anyway
is negligible). In future work one might consider studying the issue of
sharing the cost of computation between transactions in a block in a way that
motivates the nodes to compute.}. If there are $k$ such nodes, each of them
has the same probability of $\frac 1 k$ to be the first to authorize the
transaction. Note that $k$ is the number of real nodes, fake identities do
not increase the probability of winning the reward. A node $v$ that
authorizes the transaction can declare that it did so with any identity $p_h$
it created. This determines a chain of identities $p_1,\ldots, p_h$ which we
call the {\em authorizing chain}. The chain ends with $p_h$, the declared
identity of the authorizer, and starts with $p_1$, an identity of the seed
that roots the tree $v$ is in. This chain is a superset of the real path in
the tree, potentially including clones of some nodes (nodes are not able to
remove predecessor identities from the chain due to the use of cryptographic
signatures). Rewards are allocated to identities on the authorizing chain in
reverse order, reward $r_1$ is allocated to the authorizing identity and
$r_h$ to the seed. Each node's reward is the sum of the rewards of all its
identities (true and fake) in the authorizing chain. Currently the bitcoin
protocol requires to attach a minimal reward $c>0$ to every transaction. This
reward is given to the node that authorizes the transaction. This is the
prize that the nodes are competing on. We normalize $c$ to be $1$.

%We assume that there exists a
%minimal payment $c$ that at least covers the expected computation cost for a
%single node to authorize the transaction {\em by itself} such that any node that is
%promised a reward of $c$ will attempt to authorize the transaction.\footnote{Assume that $n$ nodes are trying to authorize the transaction. The total expected cost of {\em all} of the nodes till one of them authorizes the transaction is $c$ (the same as the cost of one node authorizing the transaction alone), so each node has expected cost of $c/n$ till one node authorizes the transaction. Yet, we pay the authorizing node $c$ to compensate for the fact that he only authorizes the transaction with probability $1/n$, so his expected payment is exactly his cost.}
%Thus, we
%require that the authorizer reward $r_1$ is at least $c$.\footnote{In the
%actual Bitcoin protocol many transactions are authorized together, and thus
%the minimum reward $r_1$ per transaction is relatively small since the
%computation cost is split between many transactions.} For a smaller reward,
%nodes will refuse to participate in the first place (due to individual
%rationality). We normalize $c$ to be $1$.

We assume that all players are expectation maximizers, so the utility of a player in a given profile is its expected utility, where expectation is taken over the random selection of the authorizer $w$.

\subsubsection*{Our Results}

A successful reward scheme has to posses several properties. First, it has to incentivize \emph{information propagation} and \emph{no duplication}. That is, it will be in a node's best interest to distribute the transaction to all its children without duplicating itself,
as well as never duplicating when it authorizes. Second, at the end of the distribution phase most of the nodes have to be aware of the transaction. Lastly, we would like to achieve this with small rewards, while minimizing the number of seeds, to ease the burden of initial distribution.

Designing a scheme that simultaneously has all these properties is obviously
not a trivial task, and in particular specific care has to be given to the
Sybil proofness requirement. The current literature on the Sybil proof
mechanisms contains mostly negative results, and in the few cases in which a
positive result is obtained, the setting is usually very specific. The reason is previous works insisted that
creating Sybils will \emph{never} be profitable. In contrast, we use our
game theoretic model to take an alternative path: we will guarantee these
properties, Sybil proofness in particular, \emph{in equilibrium}.

Using this concept, we introduce a new family of schemes: {\em almost uniform} schemes. Each member in the family is parameterized by a height parameter $\mathcal H$ and a reward parameter $\beta$. Let $v$ be the node that authorized the transaction and suppose that $v$ is
the $l$'th node in the chain. If $l>\mathcal H$ no node is rewarded (so
nodes ``far'' from the seed will not attempt to authorize the transaction). Otherwise, each node in the chain except $v$ gets a reward of
$\beta$, and $v$ gets a reward of $1+(\mathcal H-l+1)\beta$. We show that if there are $\Omega (\beta^{-1})$ seeds, only strategy profiles that exhibit information propagation and no duplication survive every order of iterated removal of dominated strategies. Furthermore, there exists an order in which no other strategy profiles survive.

Iterated removal of dominated strategies is the following common technique
for solving games: first the set of surviving strategies of each player is
initiated to all its strategies. Then at each step, a strategy of one of the
players is eliminated from the set. The strategy that is eliminated is one
that is dominated by some other strategy of the same player (with respect to
the strategies in the surviving sets of all other players). This process
continues until there is no strategy that can be removed. The solution
concept prescribes that each player will only play some strategy from his
surviving set. In general the surviving sets can depend on the order in which
strategies are eliminated.
Yet, we show that in our case, regardless of the order of elimination,
profiles of strategies in which every node propagates information and never duplicates, survive.
Moreover, for a specific order of elimination they are the only strategies that survive.

Although the description of our schemes is simple, the analysis itself is quite delicate, but let us give a bird-eye view of it now. The intuition behind the elimination process is that decreasing competition
by your own descendants might be unprofitable due to
% Moshe 4.2: we did not define "distribution rewards" and it was hard to understand, so I made this more explicit.
%OLD: losing possible distribution rewards
possibly losing rewards for relaying, in case one of these descendants authorizes.
Consider one particular node. It faces the following dilemma: by duplicating itself one more time it increases its reward in case one of its descendants authorizes the transaction, but it (potentially) decreases the number of descendants aware of the transaction, and thus may decrease its expected reward. To see this, consider, for simplicity, a seed. Suppose, for the sake of explanation, that all of its descendants always propagate information and never duplicate. If the seed does not clone itself then each one of its children spans a $d$-array tree of height $\mathcal H-1$. If it clones itself once, then each one of its children spans a $d$-array tree of smaller height of $\mathcal H-2$, and the number of descendants of the node that receive the information decreases by almost factor $d$.
This simple observation stands at the heart of the analysis, but the reader should note that it is over simplified: it is far from being clear that the descendants will propagate information with no duplication in this scheme.

As explained, if the amount of external competition
from non-descendent nodes is small, a node prefers not to distribute the
transaction, and thus increase the probability that it receives the reward
for authorization. However, if sufficiently many
non-descendent nodes are aware of the transaction, the node prefers to
duplicate itself one less time, and thus distribute the transaction to its
children and increase its potential distribution reward. Once all nodes
increase distribution, the arms race begins: the competition each node faces
is greater, and again it prefers to duplicate itself one less time and
distribute all the way to its grand-children. As this process continues, all
nodes eventually prefer to distribute fully and never to duplicate.

Two values of $\beta$ give us schemes that are of particular interest. The first is when $\beta=1$. In this case the $(1,\mathcal
H)$-almost-uniform scheme requires only a constant number of seeds and the
total payment is always $O(\mathcal H)$. The second scheme is the $(\frac 1 {\mathcal H},\mathcal H)$-almost-uniform scheme.
This scheme works if the number of seeds is
$\Omega(\mathcal H)$. Its total payment is $2$.

We combine both schemes to create a reward scheme that has \emph{both} a
constant number of seeds and a constant overhead. The \emph{Hybrid Scheme}
first distributes the transaction to a constant number of seeds using the
\emph{$(1,1+ \log_d H)$-almost-uniform scheme}. Then we can argue that at
least $H$ nodes are aware of the transaction. This fact enables us to further
argue that the $(\frac 1  H, H)$-almost uniform scheme guarantees that
the transaction is distributed to trees of height $H$. At the end of the
distribution phase, most of the nodes that are aware of the transaction
receive a reward of $\frac 1 H$ if they are in the successful chain, so the
expected overhead is low. We have the following (imprecisely stated) theorem:

\vspace{0.1in}\noindent {\bf Theorem:} In the hybrid rewarding scheme, if the
number of seeds $t \geq 14$, the only strategies that always survive iterated elimination of dominated strategies exhibit information propagation and no duplication. In addition, there exists an elimination order in which the only strategies that survive exhibit information propagation and no duplication.
Furthermore, the expected total sum of payments is at most $3$.

\vspace{0.1in}\noindent Notice that this scheme exhibits in equilibrium low
overhead, Sybil proofness, and provides the nodes with an incentive to propagate
information.

Iterated removal of dominated strategies is a strong solution concept,
but ideally we would like our rewarding scheme to achieve all
desired properties in the stronger notion of dominant strategies equilibrium.
However, we show that this is impossible to achieve:

\vspace{0.1in}\noindent {\bf Theorem:} Suppose that $ H\geq 3$. There is no
Sybil-proof reward scheme in which information propagation and no duplication
are dominant strategy for all nodes at depth $3$ or less.

\subsubsection*{Related Work}
The paper describing Bitcoin's principles was originally published as a white
paper by Satoshi Nakamoto~\citeyear{Satoshi}. The protocol was since developed in
an open-source project. To date, no formal document describes the protocol in
its entirety, although the open source community maintains wiki pages devoted
to that purpose\nocite{wiki}. Some research has been conducted with
regards to its privacy~\cite{Reid11}. Krugman~\citeyear{K} discusses some related economic concerns.

Chitnis et al. \citeyear{CHKM12} discuss other game theoretic aspects of the DARPA Network Challenge -- such as finding the best strategy for a group to find the balloons.

The subject of incentives for information dissemination, especially in the
context of social and peer-to-peer networks has received some attention in
recent years. Kleinberg and Raghavan~\citeyear{Kleinberg05} consider
incentivizing nodes to search for information in a social network. A node
that possesses the information is rewarded for relaying it back, as are nodes
along the path to it. A key difference between their model and ours, is that
their model assumes nodes either posses the sought-after information or do
not. If they do, then there is no need for further propagation, and if they
do not, they cannot themselves generate the information, and so do not
compete with nodes they transmit to (they do however compete for the
propagation rewards with other unrelated nodes).

Douceur and Moscibroda \citeyear{DM07} design recruitment mechanisms similar
in spirit to our own to aid incremental deployment in networked systems. They
formalize various properties schemes should ideally uphold and present
mechanisms as well as impossibility results. Unlike our work, their model and
properties are not game theoretic, and the strategies of different
participants are not considered or analyzed with respect to any solution
concept.

Emek \emph{et. al.}~\citeyear{Emek11} consider reward schemes for social
advertising scenarios. In their model, the goal of the scheme is to
advertise to as many nodes as possible, while rewarding nodes for forwarding
the advertisement. Unlike their scenario, in our case the scheme is
eventually only aware of a chain that results in a successful authorization,
and only awards nodes on the path to the successful authorizer. Drucker and Fleischer ~\citeyear{Drucker12}
propose sybil-proof mechanisms for a similar setting.

Other works consider propagation in social networks without the aspect of
incentives. For example~\cite{Dyagilev10}, which considers ways to detect
propagation events and examines data from cellular call data.

\subsubsection*{Future Research}

In this work we propose a novel low cost reward scheme that incentivizes
information propagation and is Sybil proof. Currently we
assume that the distributing network is modeled as a forest of $t$ complete $d$-ary trees. A
challenging open question is to consider other types of networks, for example random $d$-regular graphs, which may capture better peer-to-peer networks\footnote{Notice that when $d$ is small (e.g. constant) and the number of nodes
$n$ is large, transaction propagation in a random graph (till about $1/d$ fraction of the graph receives the information) resembles a tree. In some sense, building incentives for information propagation in trees is harder than general graphs as each node monopolizes the
flow of information to its descendants.}.
Another interesting extension to consider
is one in which nodes have different processing power. That is,
each node $v$ has a processing power $CPU_v$, then the probability of a node $v$ that is aware of the transaction to authorize
it is $\frac {CPU_v} {\Sigma_uCPU_u}$.

We leave the analysis of these models, as well as the implementation and empirical experimentation, to future research.

\section{New Reward Schemes}

Our main result is the Hybrid scheme, a reward scheme that requires only a constant number of seeds and a constant overhead. We will show that the only strategies that always survive iterated removal of dominated strategies are ones that exhibit information propagation and no duplication. The basic building blocks for this construction is a family of schemes, almost-uniform schemes, with less attractive properties. However, we will show that combining together two almost-uniform schemes enables us to obtain the improved properties of the Hybrid scheme.

\subsection{The Basic Building Blocks: Almost-Uniform Schemes}
The $(\beta,\mathcal H)$-almost-uniform scheme pays the authorizing node in a
chain of length $h$ a reward of $1+\beta \cdot (\mathcal H-h+1)$. The rest of
the nodes in the chain get a reward of $\beta$. If the chain length is
greater than $\mathcal H$ no node receives a reward. The idea behind giving
the authorizing node a reward of $1+\beta \cdot (\mathcal H-h+1)$ is to mimic
the reward that the node would have gotten if it duplicated itself $\mathcal
H-h$ times. %Therefore, we can assume that $p^{l,v}=0$ for every node $v$ and $l$.
Therefore, we can assume that no node duplicates itself before trying to
authorize the transaction, and focus only on duplication before sending to
its children. Figure~\ref{fig_uniform} illustrates the reward scheme.

\begin{figure}[!ht]
  \begin{center}
 \includegraphics[width=0.7\linewidth]{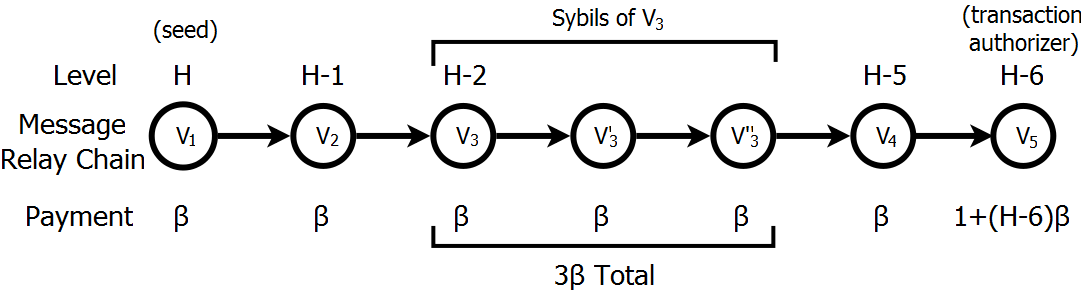}
    
  \end{center}
  \caption{\label{fig_uniform}
  An example of the transaction relay chain from the seed $v_1$ to the authorizing node $v_5$ in the $(\beta,\mathcal H)$-almost-uniform scheme. In the example, node $v_3$ added $2$ fake identities ($v'_3$ and $v''_3$) before relaying the transaction to $v_4$, all other nodes relay with no duplications. The payment to each identity appears below the chain, note that node $v_3$ receives $3\beta$ in total due to its clones. Each true identity is associated with a ''level'', representing the maximal number of identities (true and fake) it could have added to the prefix of the chain when receiving it, to get to a chain of length $H$.}
\end{figure}

\begin{theorem}\label{thm-almost-uniform}
Let $d\geq 3$. Suppose that the number of seeds is $t\geq 7$ and in addition there are initially at least $2\beta^{-1}+ 6$ nodes except the $t$ seeds that are aware of the transaction.
% Moshe 3.2 : I have change the statement here and in the other places below.
In the $(\beta,\mathcal H)$-almost-uniform scheme only profiles of strategies in which every node of depth at most $\mathcal H$ never duplicates and always propagates information survive in {\em every} iterated removal of weakly dominated strategies.
Furthermore, there is an elimination order in which {\em only} these profiles of strategies survive. The total payment is $1+\mathcal H\cdot \beta$.
\end{theorem}

While the $2\beta^{-1}+ 6$ additional nodes that are aware the transaction are not necessarily seeds, for now, one can think about them as additional seeds. The Hybrid scheme will exploit this extra flexibility to simultaneously obtain low overhead and small number of seeds. Before proving the theorem, let us mention two of its applications:

%In many games iterated removal of dominated strategies may lead to different outcomes. We show that in our case, every order of elimination leads to the same outcome:
%
%\begin{theorem}[Uniqueness]\label{thm-almost-uniform-unique}
%Let $d\geq 3$. Suppose that the number of seeds is $t\geq 7$ and in addition there are initially at least $2\beta^{-1}+ 6$ nodes except the $t$ seeds that know about the transaction. Then, in every elimination order of weakly dominated strategies propagating information and no duplication are the only surviving strategies.
%\end{theorem}
% Moshe 3.2 : I have change the statement of the corollary
\begin{corollary}
Let $d\geq 3$.
\begin{enumerate}
\item Consider the $(\frac 1 {\mathcal H}, \mathcal H)$-almost uniform scheme ($\beta=\frac 1 {\mathcal H}$),
with at least $2\mathcal H + 13$ seeds.
Only profiles of strategies in which every node of depth at most $\mathcal H$
never duplicates and always propagates information survive in {\em every} iterated removal of weakly dominated strategies.
 % in the $(\frac 1 {\mathcal H},\mathcal H)$-almost-uniform scheme.
Furthermore, there is an elimination order in which {\em only} these profiles of strategies survive.
%Furthermore, there is an elimination order in which these strategies are the only ones that survive.
The total payment is $2$.
\item Consider the $(1,\mathcal H)$-almost uniform scheme ($\beta=1$) with at least $15$ seeds.
%Then, information propagation and no duplication are the only strategies that always survive iterated removal of weakly dominated strategies in the $(\beta,\mathcal H)$-almost-uniform scheme. Furthermore, there is an elimination order in which these strategies are the only ones that survive.
Only profiles of strategies in which every node of depth at most $\mathcal H$
never duplicates and always propagates information survive in {\em every} iterated removal of weakly dominated strategies.
 % in the $(1,\mathcal H)$-almost-uniform scheme.
Furthermore, there is an elimination order in which {\em only} these profiles of strategies survive.
The total payment is $\mathcal H + 1$.
\end{enumerate}
\end{corollary}

Both parts of the corollary are direct applications of the theorem, by making sure that the number of initial seeds is at least $2\beta^{-1}+ 13$. The next subsection introduces the Hybrid scheme that combines between two almost uniform schemes, one with $\beta=1$ and one with $\beta=\frac 1 {\mathcal H}$, and obtain a scheme that uses both a constant number of seeds and its total expected payment is a small constant that does not depend on $\mathcal H$.

The rest of the section is organized as follows. In Subsection \ref{subsec-almost-uniform-proof} we prove our Theorem \ref{thm-almost-uniform}. The next subsection presents the Hybrid scheme.
% Finally, Subsection
Appendix \ref{subsec-implementation} discusses how to implement the Hybrid scheme within the framework of the Bitcoin protocol.

\subsection{The Final Construction: The Hybrid Scheme}

We now present our main construction, the Hybrid scheme. The scheme works as follows: we run the $(\frac 1 H,H)$-almost uniform scheme with a set of $A$ seeds ($|A|=a$) and simultaneously run the $(1,1+\log H)$-almost uniform scheme\footnote{All logarithms in this paper have base $d$, so we write $\log H$ to denote $\log_d H$.} with a set of $B$ seeds ($|B|=b$).

% Moshe 3.2 : I have change the statement of the theorem
\begin{theorem}
Assume that $d\geq 3$, $t\ge 15$ and consider the Hybrid scheme with set $A$ of size $a=t-7$ and set $B$ of size $b=7$.
Let $Z$ be the set of nodes of depth at most $H$ in trees rooted by seeds in $A$, and $1+\log H$ for nodes rooted by seeds in $B$.

In the Hybrid scheme only profiles of strategies in which every node in $Z$ never duplicates and always propagates information
survive in {\em every} iterated removal of weakly dominated strategies.
Furthermore, there is an elimination order in which {\em only} these profiles of strategies survive.
The total payment is in expectation at most $3$, and $2+\log H$ in the worst case.
%At least $\frac a {a+b}$-fraction
At least $\frac {t-7} t$-fraction
of the network is aware of the transaction after the distribution phase.
\end{theorem}
%\begin{theorem}
%Assume that  $a\geq b\geq 7$ and consider the Hybrid scheme with set $A$ of size $a$ and set $B$ of size $b$.
%Let $Z$ be the set of nodes of depth at most $H$ in trees rooted by seeds in $A$, and $1+\log H$ for nodes rooted by seeds in $B$.
%
%In the Hybrid scheme only profiles of strategies in which every node in $Z$ never duplicates and always propagates information
%survive in {\em every} iterated removal of weakly dominated strategies.
%Furthermore, there is an elimination order in which {\em only} these profiles of strategies survive.
%The total payment is in expectation at most $3$, and $2+\log H$ in the worst case.
%At least $\frac a {a+b}$-fraction
%of the network is aware of the transaction after the distribution phase.
%\end{theorem}
\begin{proof}
The theorem is a consequence of Theorem \ref{thm-almost-uniform}: all nodes up to depth $\log H$ in trees rooted by nodes in $B$ will propagate information and will not duplicate themselves. Thus, when considering the nodes in $A$, there are at least $7\cdot \frac{d\cdot d^{\log H}-1}{d-1}\geq 7H\geq 2H+6$ nodes that are aware of the transaction, which are not part of the trees rooted by nodes in $A$. Thus we can apply again Theorem \ref{thm-almost-uniform} to claim each seed in $A$ roots a full tree of height $H$.

Notice that the worst case payment is $1+\log H$ (because of the use of the $(1,1+\log H)$-almost uniform scheme). The expected payment is $$\dfrac {a\cdot \frac{d^H-1}{d-1}\cdot 2 + b\cdot \frac{dH-1}{d-1}\cdot (1+\log H)} {a\cdot \frac{d^H-1}{d-1}+b\cdot \frac{dH-1}{d-1}} \leq 3.$$ As for the number of nodes that are aware of the transaction at the end of the distribution phase: all trees rooted by nodes in $A$ are fully aware of the transaction. These nodes are
% Moshe 2.2:
at least $\frac a {a+b}$-fraction
%OLD : a fraction of $\frac a {a+b}$
of the network (notice that we conservatively ignored nodes rooted by seeds in $B$ that are aware of the transaction).
\end{proof}

\subsection{Proof of Theorem \ref{thm-almost-uniform}}\label{subsec-almost-uniform-proof}
Given specific values of $\beta$ and $\mathcal H$, the $(\beta,\mathcal H)$-almost uniform scheme defines the utilities of the nodes. We next consider the {\em $(\beta,\mathcal H)$-game} which is the game the nodes are playing given these utilities and their strategies.

\subsubsection{Notation and Formalities for the $(\beta,\mathcal H)$-game}\label{sec-prelim}
In order to present the proof we need to establish some notations regarding the $(\beta,\mathcal H)$-game.

We start by discussing the strategy space of a node. Only nodes that are aware of the transaction can relay it and get any reward, so we focus on these nodes and ignore all other nodes.
A node $v$ that receives information from his parent observes a chain of identities $p_1, p_2, \ldots, p_h$, ending with its own identity ($p_h=v$) which follows its parent identity ($p_{h-1}$ is $v$'s parent if $v$ is not a seed. If $v$ is a seed the chain has only one element $p_1=v$). This chain is the result of the strategies of $v$'s ancestors, which we discuss below.
The chain $p_1, p_2, \ldots, p_h=v$ which $v$ observes defines $v$'s {\em level $l(v)$}, which is the maximal number of identities (true and fake) it can have in any completion of the observed chain to a chain of length $\mathcal H$. Thus, for $i\leq \mathcal H$ the level of $v$ is $l(v) = \mathcal H-i+1$ (note that the level of each seed is defined to be $\mathcal H$). For $i>\mathcal H$ we define $l(v)=0$.

A strategy of a node can only depends on the length of the chain it observes.
Note that for any chain of length at most $\mathcal H$ there is a one to one
mapping between the length of the chain $v$ observes and the level of $v$, so
we think of $v$'s strategy as a function of its level $l(v)$.

When a node $v$ receives the transaction, it decides whether it wants to send
the transaction onward, as well as the number of times it will duplicate
itself before sending. This decision is made separately with respect to each
child. One additional decision is whether the node wants to duplicate itself
when authorizing the transaction. As discussed above, both decisions are
based on the level of $v$ and nothing else. For each node $v$ and level $l$,
$c^{l,v}_1$ denotes the number of times $v$ clones itself before sending the
transaction to its $i$'th child. Additionally, $p^{l,v}$ denotes the number
of times a node duplicates itself before it tries to authorize the
transaction (note that if it succeeds it will use its last identity to report
its success). With these notations we can finally represent the strategies of
the nodes. The strategy $S_v$ of a node $v$ is represented as follows: for
every $1\leq l\leq H$, there is a tuple $S_v(l)=(c^{l,v}_1,\ldots,
c^{l,v}_d)$ where $0 \leq c^{l,v}_i \leq l-1$ and a number $0 \leq p^{l,v}
\leq l-1$. Observe that once we fix the strategies of all nodes, the chain
each node $v$ observes (if at all) is fixed.
% Note that this recursive definition might end up with negative!
%In particular, the level of a node $v$ that is the $i$'th child of $u$ in the tree can be computed recursively: if $l(u)=0$ then $l(v)=0$, and if $l(u)>0$ then $l(v)=l(u)-1-c^{l(u),u}_i$ (note that $c^{l(u),u}_i$  is the $i$'s element of $S_u(l)$).

If node $v$ of level $l$ decides not to send the transaction to its $i$'s child it can implement this by setting $c^{l,v}_i=l-1$. Now we can say that a node $v$ \emph{propagates information and does not duplicate at level $l$} if $S_v(l)=(0,\ldots, 0)$ and $p^{l,v}=0$. We also say that a node $v$ \emph{never duplicates and always propagates information}, if for every level $l$ node $v$ propagates information and does not duplicate at level $l$.
% Observe that if $v$ is aware of the transaction, then $H-l(v)+1$ is the length of the chain from the seed to $v$ (including clones).

Given a strategy profile $S$ and a reward scheme the utility of each node is defined as follows. An authorizer $w$ is chosen uniformly at random among the players that are aware of the transaction and try to authorize it.
Let $p_1, p_2, \ldots, p_h$ be the authorizing chain, it ends with the last identity of $w$ and starts with $p_1=s$ where $s$ is the seed that roots the tree of $w$. Denote $l = H-h+1$.
Then, we allocate rewards to all identities in the chain: $w$ gets a reward of $r^s_{1,l}$, its predecessor gets $r^s_{2,l}$ and so on. The total reward a node receives is the sum of rewards of its true and fake identities in the successful chain.

\subsubsection{The Proof Framework}

We start with the following definition. The intuition is that in a
$\varphi$-subgame a node that has $\mathcal H-l$ ancestors (including clones), does
not clone itself and propagates information if $l\leq \varphi+1$. Otherwise
for every child it duplicates itself at most $l-\varphi-1$ times.

\begin{definition}
The \emph{$\varphi$-subgame} is the $(\beta,\mathcal H)$-game with restricted strategy spaces: the strategy space of node $v$ includes only strategies such that for every remaining strategy profile of the rest of the players, every $i$ and $l(v)$: $c^{l(v),v}_i \leq \max \{ l(v) -\varphi-1, 0 \}$.
\end{definition}

The technical core of the proof is the following lemma:

\begin{lemma}\label{lemma-main}
In the $\varphi$-subgame, suppose that there are at least $7$ seeds, and at least $2 \beta^{-1} + 6$ additional nodes are aware of the transaction. Then, any strategy $(c^{l(v),v}_1,\ldots, c^{l(v),v}_d)$ of node $v$ such that some $c^{l(v),v}_i=l(v) -\varphi-1$ is dominated by the strategy $(c^{l(v)}_1,\ldots, c^{l(v),v}_{i-1},c^{l(v),v}_i-1,c^{l(v),v}_{i+1},\ldots ,c^{l(v),v}_d)$.
\end{lemma}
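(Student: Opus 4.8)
The plan is to compare, for a fixed node $v$ with $l(v)=l$, two strategies that differ only in the amount of self-duplication toward child $i$: the strategy $\sigma$ with $c^{l,v}_i = l-\varphi-1$ and the strategy $\sigma'$ with $c^{l,v}_i = l-\varphi-2$ (one fewer clone toward that child). I would fix an arbitrary surviving profile of all other players and an arbitrary realization of everyone's (pure) behavior, and condition on the event that $v$'s transaction actually reaches this part of the tree at all (if it does not, both strategies are equivalent). Under $\sigma'$, child $i$ and its entire subtree see a chain that is one shorter than under $\sigma$; all other children's subtrees, all of $v$'s ancestors, and all non-descendant nodes are completely unaffected, since in a $\varphi$-subgame the length a descendant sees is exactly determined by $l(v)$ plus the clones $v$ inserts, and $v$'s decisions for other children and its own authorization attempt are held fixed. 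So the comparison reduces to: within the subtree rooted at child $i$ (plus the extra reward $v$ collects from that subtree's activity), does decreasing the apparent chain length by one weakly help $v$?

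Next I would decompose $v$'s expected payoff into two pieces: (a) the event that the authorizer lies outside child $i$'s subtree (in one of $v$'s other subtrees, or is $v$ itself), and (b) the event that the authorizer lies inside child $i$'s subtree. In case (a), $v$'s reward is identical under $\sigma$ and $\sigma'$ — the only thing that could change is the total number of competitors, but all other subtrees behave identically and, since we are in the $\varphi$-subgame, child $i$'s subtree is fully propagated with no duplication in both cases, so the same set of real nodes is aware of the transaction and competes; the authorization probabilities are unchanged. Hence only case (b) matters. In case (b), using the $(\beta,\mathcal H)$-almost-uniform reward formula, $v$'s reward when the winning chain has length $h$ is $\beta$ in both cases — but under $\sigma'$ the chain is one shorter, which can only turn a "chain too long, nobody paid" outcome into a "paid" outcome, never the reverse. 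More precisely: the set of real authorizers in child $i$'s subtree for whom the resulting chain length is $\le \mathcal H$ is (weakly) larger under $\sigma'$ than under $\sigma$, and for every authorizer counted under both, $v$ gets exactly $\beta$ either way; for authorizers counted only under $\sigma'$, $v$ gains $\beta > 0$; and $v$ never loses. Summing over the realization, the expected payoff under $\sigma'$ is at least that under $\sigma$.

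The remaining work — and the main obstacle — is to upgrade this "weakly dominates" conclusion to a strict improvement on at least one surviving profile, which is what is needed for removal by weak domination. Here is where the hypotheses "at least $7$ seeds" and "at least $2\beta^{-1}+6$ additional aware nodes" enter: they guarantee that, against some surviving profile, there is a positive-probability event in which (i) a node in child $i$'s subtree at exactly depth $\mathcal H$ (counting $v$'s clones) is the authorizer, so that under $\sigma$ the chain has length $\mathcal H+1$ and pays nothing while under $\sigma'$ it has length $\mathcal H$ and pays everyone on it, and (ii) the total number of aware nodes is large enough that such a deep node genuinely has positive authorization probability, i.e. the relevant subtree is actually populated. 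I would verify that the counting bound $2\beta^{-1}+6$ (together with $d\ge 3$, $t\ge 7$) is exactly what makes the strictly-helped event have positive probability — essentially, enough external competitors exist that the borderline-depth node is reached and competes, so the $\beta$ gain is realized with positive probability. Establishing (i)–(ii) carefully against a worst-case surviving profile, rather than against the fully-propagating profile, is the delicate point; everything else is the bookkeeping of the reward formula sketched above.
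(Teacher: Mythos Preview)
Your decomposition rests on a false premise. You assert that in the $\varphi$-subgame, ``child $i$'s subtree is fully propagated with no duplication in both cases, so the same set of real nodes is aware of the transaction and competes.'' This is incorrect: the set of aware nodes is \emph{not} the same under $\sigma$ and $\sigma'$. In the $\varphi$-subgame, a node at level $l(u)\le\varphi+1$ has $c^{l(u),u}_j=0$ for all $j$, so child $i$'s subtree propagates exactly as many layers as child $i$'s level allows. Under $\sigma$ (with $c_i=l-\varphi-1$) child $i$ has level $\varphi$ and its subtree is a complete $d$-ary tree of height $\varphi$; under $\sigma'$ (with $c_i=l-\varphi-2$) child $i$ has level $\varphi+1$ and the subtree has height $\varphi+1$. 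So $\sigma'$ adds an entire layer of $d^{\varphi}$ new competitors. This is precisely the tension the lemma is about: duplicating less means more real descendants compete.

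Consequently both halves of your case split break down. In case (a), the authorization probabilities \emph{do} change: with more competitors under $\sigma'$, the probability that $v$ itself authorizes (earning $1+\beta l$) drops, which is a genuine cost. In case (b), there is no ``chain too long, pays nothing'' event in either strategy within the $\varphi$-subgame; the real difference is that under $\sigma'$ there are more descendants but $v$ has one fewer clone on each such path, so $v$ collects $\beta(l-y_i-1)$ per descendant instead of $\beta(l-y_i)$. The actual argument (see the paper) writes out $U_v$ explicitly as a ratio, compares $U_v(y_1,\ldots,y_d)$ with $U_v(y_1,\ldots,y_d+1)$, and after algebra reduces the strict inequality to a lower bound on $k$, the number of aware non-descendants. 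The hypotheses $t\ge 7$ and $2\beta^{-1}+6$ additional aware nodes are exactly what furnish $k\ge 2\beta^{-1}+6d\cdot A(y_d)+6$; they are a quantitative threshold making the benefit outweigh the cost, not merely a device to produce one strictly-helped profile.
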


The proof of the lemma is in Subsection \ref{subsec-lemma-main}. Let us show
how to use the lemma in order to prove the theorem. First, we observe that
the $0$-subgame is simply the $(\beta,\mathcal H)$-game. If the conditions of the lemma
hold, we can apply the lemma to eliminate all strategies
$(c^{l(v),v}_1,\ldots, c^{l(v),v}_d)$ such that some $c^{l(v),v}_i=l(v) -1$.
Notice that now we have a $1$-subgame. Applying the lemma again we get a
$2$-subgame. Similarly, we repeatedly apply the lemma until we get a
$(\mathcal H-1)$-subgame. Notice that the only surviving strategy of each
node is $(0,\ldots, 0)$ for every $l(v)$: that is each node propagates
information and does not duplicate. This shows that there exists an
elimination order in which every node propagates information and does not
duplicate. Next, we prove that the strategy profile in which all nodes
propagate information and do not duplicate survives \emph{every} order of
iterated elimination of dominated strategies process. % (proof in the appendix):

\begin{lemma}\label{lemma-uniquness}
Let $T$ be a sub-game that is reached via iterated elimination of dominated
strategies in the $(\beta, \mathcal H)$-game, and suppose that there are  at least $7$ seeds, and
at least $2 \beta^{-1} + 6$ additional nodes are aware of the transaction. Then there
exists a strategy profile $s \in T$ in which every node at depth at most $\mathcal H$ fully propagates and do not
duplicate.
\end{lemma}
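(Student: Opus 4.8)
We must show that the all-propagate/no-duplicate strategy profile survives every order of iterated elimination of weakly dominated strategies in the $(\beta,\mathcal H)$-almost-uniform scheme (Lemma~\ref{lemma-uniquness}).
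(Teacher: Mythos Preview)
What you have written is not a proof; it is only a restatement of the lemma's conclusion. You have identified the goal correctly, but you have offered no argument toward it. A proof must explain \emph{why} the full-propagation/no-duplication profile cannot be eliminated at any stage of any elimination order, and you have not taken even a first step in that direction.

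The paper's argument proceeds by contradiction. Suppose some elimination order removes the last full-propagation/no-duplication profile; let $T'$ be the sub-game just before that removal. Then for some player $i$, the strategy $s_i$ of full propagation and no duplication is weakly dominated in $T'$ by some other strategy $s_i'$. In particular, against the profile $s_{-i}$ in which all other players fully propagate and do not duplicate (which is still in $T'$ by choice of $T'$), we would need $u_i(s_i,s_{-i}) \leq u_i(s_i',s_{-i})$. The paper then interpolates between $s_i'$ and $s_i$ by a finite sequence of strategies, each obtained from the previous by reducing the duplication count toward the child with the current maximum by one. Claim~\ref{claim_internal} shows that each such step strictly increases $i$'s utility against $s_{-i}$, because the hypotheses of that claim (enough external competition, and the child's subtree fully spanned) are guaranteed when the other players play $s_{-i}$. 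Chaining these strict inequalities yields $u_i(s_i',s_{-i}) < u_i(s_i,s_{-i})$, contradicting the supposed domination.

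To turn your proposal into a proof you must supply this entire argument: the contradiction setup, the identification of the critical profile $s_{-i}$ that is still available in $T'$, the interpolation sequence, and the invocation of Claim~\ref{claim_internal} at each step.
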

\begin{proof}
Let us assume that some elimination order ends with a sub-game that does not
contain any profile with full propagation and no duplication. Let $T'$ be the
last sub-game in the elimination order for which there is still a profile
with full propagation and no duplication for every node at depth at most
$\mathcal H$. It must be that for some player $i$ in $T'$, the strategy $s_i$
of full propagation and no duplication is dominated by another strategy
$s_i'$ in which this player either duplicates or does not fully distribute.
In particular, let us fix the behavior of the other players to be the profile
$s_{-i}$ in which they fully distribute and do not duplicate (such a profile
exists in $T'$ by assumption). For $s'_i$ to dominate $s_i$, it must be that
$u_i(s_i,s_{-i}) \leq u_i(s_i',s_{-i})$.

We will show however that the opposite holds, and thereby reach a
contradiction. We define a series of strategies $s_i^1, s_i^2, s_i^3, \ldots
,s_i^m$ such that $s_i^1 = s_i'$, and $s_i^m = s_i$ for which we shall show:

$u_i(s_i',s_{-i}) = u_i(s_i^1,s_{-i})< u_i(s_i^2,s_{-i})<\ldots <
u_i(s_i^m,s_{-i}) = u_i(s_i,s_{-i})$. Note, that the strategies $s_i^2,
s_i^3, \ldots ,s_i^{m-1}$ may or may not be in $T'$, and that we are merely
using them to establish the utility for $s_i, s_i'$.

The strategy $s_i^{j+1}$ is simply the strategy $s_i^j$, with one change:
player $i$ replicates itself one less time to one of its children.
Specifically, if player $i$ replicates itself $(c_1,\ldots ,c_d)$ times
correspondingly for each of its $d$ children, let $\nu = \arg\max_j (c_j)$ it
will instead replicate itself $(c_1,\ldots,c_\nu-1,\ldots ,c_d)$ times in the
new strategy.

We establish the fact that $u_i(s_i^j,s_{-i})< u_i(s_i^{j+1},s_{-i})$ by a
direct application of Claim~\ref{claim_internal}. The claim shows that
replicating one less time is better given sufficient external computation
power, and given that the node's descendants fully distribute with no
replication. Both of these are guaranteed in the profile $s'$.
\end{proof}

\subsubsection{Proof of Lemma \ref{lemma-main}}\label{subsec-lemma-main}
Now, let us observe a node $v$ and fix $l=l(v)$. The non-trivial case is when
$l\geq \varphi +1$. For convenience, we drop the superscript $(l(v),v)$ and
denote a strategy by ($c_1,\ldots, c_d)$. Without loss of generality we
assume $c_1 \leq c_2 \leq \ldots \leq c_d$. It will also be useful to define
$y_i=l-c_i-1$. Note that our previous assumption implies that $y_1 \geq y_2
\geq \ldots \geq y_d$.

Let $s_{-v}$ be a strategy profile of all other nodes except $v$, and denote
by $A_{s_{-v}}(y_i)$ the size of the subtree rooted at the $i$'th child of
$v$ that is aware of the transaction. The utility of $v$ is then:
\begin{equation*}
U_v^{s_{-v}}(y_1, \ldots, y_d) = \underbrace{\frac{1 + \beta \cdot l}{k + \sum_{i=1}^d A_{s_{-v}}(y_i)}}_{v \hbox{ authorizes}} +\underbrace{\frac{ \sum_{i=1}^d \beta (l-y_i)\cdot A_{s_{-v}}(y_i)}{k + \sum_{i=1}^d A_{s_{-v}}(y_i)}}_{\hbox{a decedent of $v$ authorizes}}
\end{equation*}
where $k$ is the number of nodes which are aware of the transaction and are not descendants of $v$ (this number includes $v$ itself). We
show that $(c_1,\ldots, c_d)$, where $c_d=l(v) -\varphi-1$ is dominated by
$(c_1,\ldots ,c_d-1)$. For this purpose, it is sufficient to show the
following claim:

\begin{claim} \label{claim_internal}
Let $T_{-v}$ be the set of strategy profiles of all nodes other than $v$ in
which:
\begin{enumerate}
\item The subtree rooted at $v$'s $d$'th child spans a complete $d$-ary tree of
   height $y_d$ when $v$ duplicates itself $c_d$ times, or of a
   height $y_d+1$ when $v$ duplicates itself $c_d-1$ times.

\item There are at least $k \geq 2\beta^{-1}+6d \frac{d^{y_d}-1}{d-1} +6$
    nodes that are not descendants of $v$ that are aware of the
    transaction.
\end{enumerate}
Then: $ \displaystyle \forall s_{-v} \in T_{-v} \quad  U_v^{s_{-v}}
(c_1,\ldots ,c_d-1) > U_v^{s_{-v}}(c_1,\ldots ,c_d) $
\end{claim}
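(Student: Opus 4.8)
The plan is to reduce the claim to one algebraic inequality and then exploit the single structural feature that makes it true: a child of $v$ whose ``learning'' subtree is large must have received many duplications from $v$, hence it carries a \emph{small} per-authorization distribution weight, so on its own it cannot make $v$'s expected utility large.

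\emph{Reduction.} Changing $c_d$ to $c_d-1$ (equivalently $y_d$ to $y_d+1$) affects only the $d$-th summand of $U_v^{s_{-v}}$. By Condition~1 the learning subtree of the $d$-th child is a complete $d$-ary tree, with $a:=A_{s_{-v}}(y_d)=\frac{d^{y_d}-1}{d-1}$ nodes before the change and $a':=A_{s_{-v}}(y_d+1)=\frac{d^{y_d+1}-1}{d-1}=da+1$ after, while the number of copies of $v$ on chains through it drops from $l-y_d=c_d+1$ to $c_d$. Writing $U_v^{s_{-v}}(c_1,\dots,c_d)=\frac{P}{Q}$ with $Q=k+a+\sum_{i<d}A_{s_{-v}}(y_i)$ and $P=(1+\beta l)+\beta(c_d+1)a+\sum_{i<d}\beta(c_i+1)A_{s_{-v}}(y_i)$, we get $U_v^{s_{-v}}(c_1,\dots,c_d-1)=\frac{P+\Delta_P}{Q+\Delta_Q}$, where $\Delta_Q=a'-a=(d-1)a+1>0$ and $\Delta_P=\beta c_d a'-\beta(c_d+1)a=\beta\big(c_d((d-1)a+1)-a\big)$ (note the $A_{s_{-v}}(y_i)$ for $i<d$ are unchanged). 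Since $\Delta_Q>0$, the claim is equivalent to $\frac{\Delta_P}{\Delta_Q}>\frac{P}{Q}$, i.e.\ $Q\,\Delta_P>\Delta_Q\,P$ --- ``the marginal reward per added competitor beats the current average'' --- which is believable exactly because $Q\ge k$ is forced to be large.

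\emph{Regrouping and the key step.} Expanding $Q\,\Delta_P-\Delta_Q\,P$ and using $\Delta_P-\beta(c_d+1)\Delta_Q=-\beta a'$ yields
\[
Q\,\Delta_P-\Delta_Q\,P \;=\; \Delta_P\cdot k \;-\;\Delta_Q(1+\beta l)\;-\;\beta a a'\;+\;\sum_{i<d}\beta\big(\delta_i\,((d-1)a+1)-a'\big)\,A_{s_{-v}}(y_i),
\]
with $\delta_i:=c_d-c_i\ge0$. The step I expect to be the main obstacle is controlling this last sum. The coefficient $\beta(\delta_i((d-1)a+1)-a')$ is negative only for $\delta_i\in\{0,1\}$ (for $\delta_i\ge2$ it equals $\beta((d-2)a+1)\ge0$, using $d\ge3$), and in exactly those two cases $A_{s_{-v}}(y_i)\le\frac{d^{y_i}-1}{d-1}$ together with $y_i=y_d+\delta_i$ caps $A_{s_{-v}}(y_i)$ by $a$ (if $\delta_i=0$) or by $a'$ (if $\delta_i=1$). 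Hence each of the at most $d-1$ negative child terms is $\ge-\beta a a'$, so
\[
Q\,\Delta_P-\Delta_Q\,P\;\ge\;\Delta_P\cdot k-\Delta_Q(1+\beta l)-d\,\beta a a'.
\]
This is where the ``large learning subtree $\Rightarrow$ small distribution weight'' anticorrelation does the real work; everything else is bookkeeping.

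\emph{Closing with Condition~2.} Since $(d-1)a+1=d^{y_d}$, Condition~2 reads $k\ge 2\beta^{-1}+6d\cdot\frac{d^{y_d}-1}{d-1}+6=2\beta^{-1}+6a'$. In the non-trivial case $c_d=l(v)-\varphi-1\ge1$, so $\Delta_P\ge\beta((d-2)a+1)$, and moreover $l=c_d+1+y_d=c_d+1+\log_d((d-1)a+1)$, so the only quantity that grows is polynomial in $a$ while $1+\beta l$ is merely logarithmic in $a$. Substituting $k\ge2\beta^{-1}+6a'$ into $\Delta_P\cdot k$: the ``$6a'$'' part gives a term $\ge 6\beta((d-2)a+1)a'$ that already dominates $d\beta a a'$ (because $6(d-2)\ge d$ for $d\ge3$) and then, using $a'\ge d^{y_d}=\Delta_Q$, swamps $\Delta_Q(1+\beta l)$; the ``$2\beta^{-1}$'' part contributes $2\Delta_P/\beta\ge2((d-2)a+1)$ to spare. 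This is a routine polynomial verification, the only mild care being at the base case $a=0$ (i.e.\ $\varphi=0$, hence $l=c_d+1$), where the claim reduces to $2c_d(1+3\beta)>1+\beta(c_d+1)$, immediate for $c_d\ge1$. Combining, $Q\,\Delta_P-\Delta_Q\,P>0$, which is the claim.
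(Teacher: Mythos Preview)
Your reduction, the regrouping identity
\[
Q\,\Delta_P-\Delta_Q\,P \;=\; \Delta_P\cdot k \;-\;\Delta_Q(1+\beta l)\;-\;\beta a a'\;+\;\sum_{i<d}\beta\big(\delta_i\,\Delta_Q-a'\big)\,A_{s_{-v}}(y_i),
\]
and the treatment of the cross terms (only $\delta_i\in\{0,1\}$ give negative coefficients, and in those cases $A_{s_{-v}}(y_i)\le a$ resp.\ $a'$, so each negative summand is at least $-\beta a a'$) are all correct and are exactly the content of the paper's manipulation up to equation~(\ref{eqn_cases}); the paper phrases the same case split as $y_i=y_d$, $y_i=y_d+1$, $y_i>y_d+1$. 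So up to the inequality
\[
Q\,\Delta_P-\Delta_Q\,P\;\ge\;k\,\Delta_P-\Delta_Q(1+\beta l)-d\,\beta a a'
\]
your argument and the paper's coincide.

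The place where your write-up is loose is the closing step. You bound $\Delta_P$ below by $\beta((d-2)a+1)$ and then assert that the ``$6a'$'' part of $k$ already ``swamps'' $\Delta_Q(1+\beta l)$. That bound on $\Delta_P$ is independent of $c_d$, whereas $l=c_d+y_d+1$ grows with $c_d$; so for fixed $y_d$ and large $c_d$ the term $\beta l\,\Delta_Q$ is not controlled by $6\beta((d-2)a+1)a'$. (Take $d=3$, $y_d=0$, $a=0$, $a'=1$: the ``$6a'$'' contribution is $6\beta c_d$ while $\Delta_Q\beta l=\beta(c_d+1)$, and the $d\beta aa'$ term is zero --- fine here, but for $y_d\ge 1$ the comparison actually fails.) What saves the inequality is the \emph{exact} value $\Delta_P=\beta(c_d\,d^{y_d}-a)$: the $2\beta^{-1}$ part of $k$ then contributes $2(c_d d^{y_d}-a)$, which grows linearly in $c_d$ and absorbs the $\beta l\,\Delta_Q$ term. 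Your base-case check $a=0$ in fact uses this exact value, not the crude lower bound, which is why it goes through.

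The paper resolves this same tension by a two-case split on the relative size of $l$ and $y_d$ (Case~I: $l\ge 2y_d+4$; Case~II: $l\le 2y_d+3$), rather than by isolating the $c_d$-dependence in $\Delta_P$. Either route closes the argument; your sketch just needs to keep the $c_d$-dependence in $\Delta_P$ through the final estimate instead of discarding it at the outset.
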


\begin{figure}[!ht]
  \begin{center}
\includegraphics[width=0.7\linewidth]{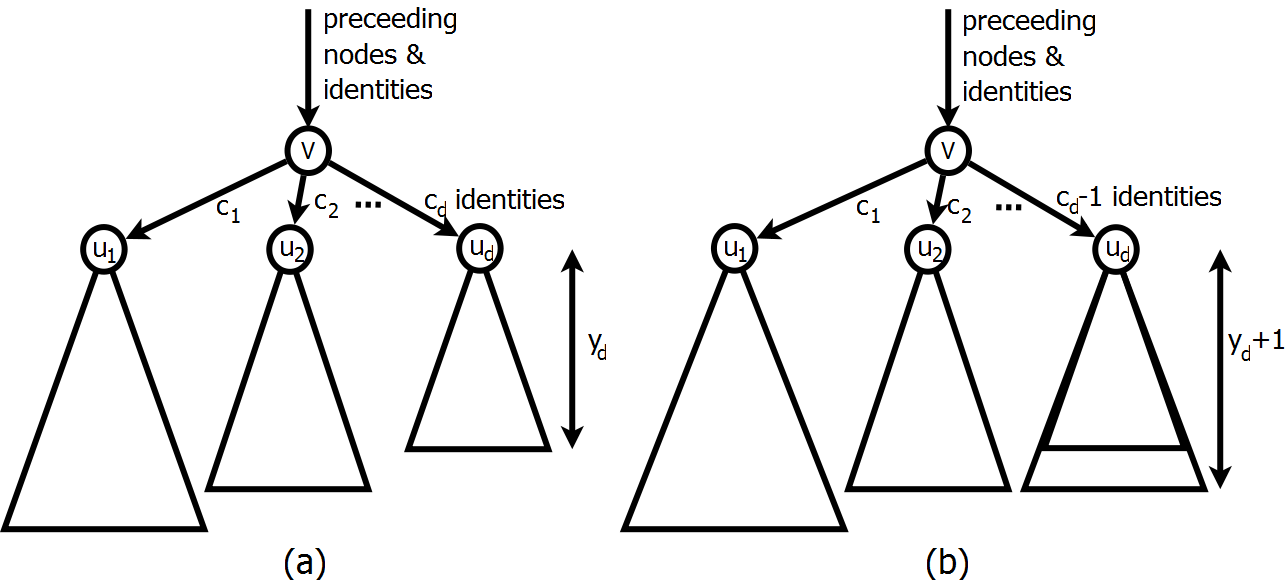}
  \end{center}
  \caption{\label{fig_trees} Node $v$ distributes the transaction to it's $i$'th child using $c_i$ identities.
  It considers distributing the transaction to the $d$'th child with one less identity, which enables the distribution tree
  below that child to extend $y_d+1$ levels (assuming all nodes below use only one identity).}
\end{figure}

The two scenarios are depicted in Figure~\ref{fig_trees}. Notice, that in any
$\varphi$-subgame, both of the conditions for the lemma hold in every
profile: in a $\varphi$-subgame we have that if node $v$ clones itself at
most $l(v) -\varphi-2$ times then his child will span a tree that contains a
full $d$-ary tree of height $\varphi$, and we are guaranteed $k$ nodes that
are aware of the transaction. We therefore turn to prove the claim.

\begin{proof}
The new strategy $(c_1,\ldots ,c_d-1)$ in effect distributes the transaction
to one additional layer of the subtree rooted at the $d$'th child of $v$.
That extra layers contains $d^{y_d}$ nodes. The utility of this strategy is
therefore:

\begin{equation*}
U_v^{s_{-v}}(y_1, \ldots,y_{d-1}, y_d+1) = \frac{1 + \beta \cdot l + \sum_{i=1}^{d-1} \beta (l-y_i)\cdot A_{s_{-v}}(y_i) + \beta(l-y_d-1)(A_{s_{-v}}(y_d)+d^{y_d})}{k+ \sum_{i=1}^d A_{s_{-v}}(y_i) +d^{y_d}}
\end{equation*}

So we have to show that:
\begin{equation*}
\forall s_{-v} \quad  U_v^{s_{-v}}(y_1, \ldots,y_{d-1}, y_d+1) > U_v^{s_{-v}}(y_1, \ldots, y_d)
\end{equation*}

For convenience, we will drop the index $s_{-v}$, but remember that we must
check for every possible strategy profile of the other nodes that:
\begin{equation*}
\frac{1 + \beta \cdot l + \sum_{i=1}^{d-1} \beta (l-y_i)\cdot A(y_i) + \beta(l-y_d-1)(A(y_d)+d^{y_d})}{k+ \sum_{i=1}^d A(y_i) +d^{y_d}} >
\frac{1 + \beta \cdot l + \sum_{i=1}^d \beta (l-y_i)\cdot A(y_i)}{k+ \sum_{i=1}^d A(y_i)}
\end{equation*}
multiplying by the denominator and dividing both sides by $\beta$:
\begin{equation*}
\left(k+ \sum_{i=1}^d A(y_i)\right)\cdot (l-y_d-1)(A(y_d)+d^{y_d})>
\end{equation*}
\begin{equation*}
\left(k+ \sum_{i=1}^d A(y_i)\right)(l-y_d)\cdot A(y_d) + d^{y_d} \cdot \left(\beta^{-1}+ l + \sum_{i=1}^d (l-y_i)\cdot A(y_i) \right)
\end{equation*}

\begin{equation*}
\left(k+ \sum_{i=1}^d A(y_i)\right)\cdot (l-y_d-1) \cdot d^{y_d} - \left(k+ \sum_{i=1}^d A(y_i)\right) \cdot A(y_d)>
d^{y_d} \cdot \left(\beta^{-1}+ l + \sum_{i=1}^d (l-y_i)\cdot A(y_i) \right)
\end{equation*}

\begin{equation*}
\left(k+ \sum_{i=1}^d A(y_i)\right) \cdot \left( l-y_d-1-\frac{A(y_d)}{d^{y_d}} \right) >
\left(\beta^{-1}+ l + \sum_{i=1}^d (l-y_i)\cdot A(y_i) \right)
\end{equation*}

Note that $l-y_d-1-\frac{A(y_d)}{d^{y_d}}>0$ since $l-y_d\geq 2$ and $\frac{A(y_d)}{d^{y_d}}<1$. Therefore we can divide by $l-y_d-1-\frac{A(y_d)}{d^{y_d}}$, after some rearranging we get that:

\begin{equation*}
k > \frac{\beta^{-1}+ l + \sum_{i=1}^d (l-y_i)\cdot A(y_i)}{l-y_d-1-\frac{A(y_d)}{d^{y_d}}} -\sum_{i=1}^d A(y_i)
\end{equation*}

\begin{equation*}
k > \frac{\beta^{-1} + l + \sum_{i=1}^d (l-y_i -l+y_d+1+\frac{A(y_d)}{d^{y_d}})\cdot A(y_i) }{ l-y_d-1-\frac{A(y_d)}{d^{y_d}}}
\end{equation*}

\begin{equation*} \label{eqn_toBound}
k\cdot(l-y_d-1-\frac{A(y_d)}{d^{y_d}}) > \beta^{-1}+ l + \sum_{i=1}^d (-y_i+y_d+1+\frac{A(y_d)}{d^{y_d}})\cdot A(y_i) = (*)
\end{equation*}

Recall that that the $d$'s child spans a full tree of height $y_d$: $A(y_d) = \frac{d^{y_d}-1}{d-1}$. We bound the right hand side of Equation~\ref{eqn_toBound} as follows:

\begin{align*}
(*) \leq  \beta^{-1}+l
&+ \displaystyle \sum_{i:y_i=y_d}^d \left(1+\frac{A(y_d)}{d^{y_d}}\right)\cdot A(y_i)
+ \displaystyle \sum_{i:y_i=y_d+1}^d \left(\frac{A(y_d)}{d^{y_d}}\right)\cdot A(y_i)\\
&+ \displaystyle \sum_{i:y_i>y_d+1}^d \left(-1+\frac{A(y_d)}{d^{y_d}}\right)\cdot A(y_i)\end{align*}

The last summation is negative because $\frac{A(y_d)}{d^{y_d}}<1$, and once we substitute $A(y_d)$ into the expression we have that:

\begin{equation*}
(*) \leq \beta^{-1}+l
+ \displaystyle \sum_{i:y_i=y_d}^d \left(1+\frac{A(y_d)}{d^{y_d}}\right)\cdot A(y_d)
+ \displaystyle \sum_{i:y_i=y_d+1}^d \frac{A(y_d)}{d^{y_d}}\cdot A(y_d+1)
\end{equation*}

Now notice that:

\begin{equation*}
\left( 1+\frac{A(y_d)}{d^{y_d}} \right)\cdot A(y_d) = \frac{A(y_d)+d^{y_d}}{d^{y_d}} \cdot A(y_d) = \frac{A(y_d+1)\cdot A(y_d)}{d^{y_d}}
\end{equation*}

and so we can write:
\begin{align}
(*) \leq \beta^{-1}+l + \displaystyle\sum_{i: y_i=y_d \lor y_i = y_d+1} (1+\frac{A(y_d)}{d^{y_d}})\cdot A(y_d) &\leq
\beta^{-1}+l + \displaystyle\sum_{i: y_i=y_d \lor y_i = y_d+1}  2\cdot A(y_d) \\
&\leq \beta^{-1}+l + 2\cdot d \cdot A(y_d)
\label{eqn_cases}
\end{align}

Therefore, we are looking for $k$ such that:
\begin{equation*}
k > \frac{ \beta^{-1}+l + 2\cdot d \cdot A(y_d)}{l-y_d-1-\frac{A(y_d)}{d^{y_d}}}=(**)
\end{equation*}

We now divide into two cases.
\paragraph{Case I:}{ $l \geq 2y_d +4$}. In this case
By using the fact $\frac{A(y_d)}{d^{y_d}}<1$ and continuing from
Equation~(\ref{eqn_cases}) we can write:

\begin{align*}
(**) &< \frac{\beta^{-1}+l +  2\cdot d \cdot A(y_d) }{ l-y_d-2} \\&\leq  \frac{\beta^{-1}+l +  2\cdot d \cdot A(y_d) }{ \frac{l}{2}} \\&\leq
\frac{2 \cdot \beta^{-1}}{l} +2 +\frac{4\cdot d \cdot A(y_d)}{l} \\&\leq \frac{1}{2} \cdot \beta^{-1} +2 + d \cdot A(y_d)
\end{align*}
where the last transition relies on the fact that $l \geq 4$.

\paragraph{Case II:}{ $l \leq 2y_d +3$}. Notice that by definition $l$ is always at least $y_d+2$. Therefore, by continuing from Equation~(\ref{eqn_cases}) we have:

\begin{equation*}
(**) \leq \frac{\beta^{-1}+l + 2\cdot d \cdot A(y_d) }{2-1-\frac{A(y_d)}{d^{y_d}}} \leq
\frac{\beta^{-1}+l + 2\cdot d \cdot A(y_d) }{1-\frac{1}{d-1}} =
\frac{d-1}{d-2} \left( \beta^{-1}+l + 2\cdot d \cdot A(y_d) \right)
\end{equation*}
where for the second transition we used:

\begin{equation*}
\frac{A(y_d)}{d^{y_d}} = \frac{d^{y_d}-1}{(d-1)d^{y_d}} < \frac{1}{d-1}
\end{equation*}
Notice that $A(y_d) \geq y_d $, hence, $l\leq 2A(y_d)+3$. Recall that $d\geq 3$ we can continue to bound $(**)$:

\begin{equation*}
(**)\leq  \frac{d-1}{d-2} \left( \beta^{-1} + (2 \cdot d+2)\cdot A(y_d)  +3 \right) \leq 2 \cdot \beta^{-1} + 6 \cdot d\cdot A(y_d)  +6
\end{equation*}
From Case I, and case II combined, we have that:

\begin{equation*}
(**)\leq  2 \beta^{-1} + 6 d\cdot A(y_d)  +6
\end{equation*}
\end{proof}

Thus, if $k\geq 2 \beta^{-1} + 6 d\cdot A(y_d)  +6$, for any node $v$ the
strategy of choosing $c_i=l(v) -\varphi-1$ for some child $i$ is dominated by
the strategy $(c_1,\ldots, c_{i-1},c_i-1,c_{i+1},\ldots ,c_d)$. Since $k$ is
the number of nodes outside the tree rooted by $v$ then for the lemma to hold
it is sufficient to have 7 seeds (each with $d$ children that span trees at
height $y_d$), and $2\beta^{-1}+6$ additional nodes.

\section{Impossibility Result for Dominant Strategy Mechanisms}

In the previous sections, we presented several reward schemes and analyzed
their behavior in equilibrium. The solution concept we analyzed was iterated
removal of dominated strategies. Although iterated removal of dominated
strategies is a strong solution concept, a natural goal is to seek reward
schemes that use even stronger solution concepts. We now show that there are
no dominant-strategy reward schemes.

\begin{theorem}
Suppose that $ H\geq 3$. There is no Sybil-proof reward scheme in which information propagation and no duplication are dominant strategy for all nodes at depth $3$ or less.
\end{theorem}
\begin{proof}
Consider a scenario in which all nodes except one seed $s$ do not propagate information. Suppose that all the direct children of the seed $s$ play the following strategy:
\begin{itemize}
\item If they have one predecessor then they do not propagate information. In case they authorize the transaction they pretend that they are a level $3$ node, by duplicating themselves once (so if they authorize the transaction the resulting chain has a length of $3$).
\item If they have more than one predecessor then they still do not propagate information. However, if they authorize the transaction they do not duplicate themselves (so again if they authorize the transaction the resulting chain has a length of $3$).
\end{itemize}
All other nodes, including the children of the children of $s$, do not
propagate information and do not duplicate themselves.

Denote by $d$ the number of direct children of $s$, and consider the possible
strategies of $s$. The seed $s$ can propagate information to all its
children. In this case his children will not propagate information and one of
them authorizes the transaction it pretends that it is a level $3$ nodes.
Using the fact that there are $t$ seeds that are aware of the transaction,
the utility of $s$ in this case is:
\begin{equation*}
\frac {r_{1,1}+ d\cdot r_{3,3}}{t+d}
\end{equation*}
However, suppose that $s$ duplicate itself once and propagate information to
its children. In this case again its $d$ children will be aware of the
transaction. But now the utility of $s$ in this case is:
\begin{equation*}
\frac {r_{1,1} + d\cdot (r_{2,3}+r_{3,3})}{t+d}
\end{equation*}
Since information propagation and no duplication should be a dominant
strategy for $s$, we must have:
\begin{eqnarray*}
\frac {r_{1,1}+ d\cdot r_{3,3}}{t+d}&\geq& \frac {r_{1,1} + d\cdot (r_{2,3}+r_{3,3})}{t+d} \\
{r_{1,1}+ d\cdot r_{3,3}}&\geq& r_{1,1} + d\cdot (r_{2,3}+r_{3,3})\\
0&\geq& r_{2,3}
\end{eqnarray*}
The contradiction will be reached by proving the following lemma:
\begin{lemma}
Suppose that $ H\geq 3$. In every Sybil-proof reward scheme in which
information propagation and no duplication are dominant strategy for all
nodes at depth $3$ or less, we have that $r_{2,3}>0$.
\end{lemma}
\begin{proof}
Fix some node $u$ that receives the transaction with exactly one predecessor. Denote by $k$ the number of nodes that are aware of the transaction and are not descendants of $u$ (this number includes $u$). Since $u$'s dominant strategy is to propagate information to its $d$ children without duplication, the utility of this strategy is no worse than the utility of not propagating information at all. Suppose that the children never propagate information and never duplicate themselves:
\begin{eqnarray*}
\frac {r_{1,2}}{k}&\leq& \frac {r_{1,2} + d\cdot r_{2,3}}{k+d} \\
\end{eqnarray*}
Since we assume that the reward for the authorizer is strictly positive, we have that $r_{1,2}>0$ and therefore:
\begin{eqnarray*}
\frac {r_{1,2}}{k+d}&<& \frac {r_{1,2} + d\cdot r_{2,3}}{k+d} \\
\end{eqnarray*}
Which gives us that $r_{2,3}>0$, as needed.
\end{proof}

\end{proof}

\subsubsection*{Acknowledgements} We thank Jon Kleinberg and Ilya Mironov for helpful discussions and valuable comments.

\bibliographystyle{acmsmall}

\bibliography{bitcoin}

\begin{thebibliography}{}

\bibitem[\protect\citeauthoryear{Bitcoin}{Bitcoin}{}]{wiki}
{\sc Bitcoin}.
\newblock The {B}itcoin wiki.
\newblock Available online at https://bitcoin.it.

\bibitem[\protect\citeauthoryear{Chitnis, Hajiaghayi, Katz, and
  Mukherjee}{Chitnis et~al\mbox{.}}{2012}]{CHKM12}
{\sc Chitnis, R.}, {\sc Hajiaghayi, M.~T.}, {\sc Katz, J.}, {\sc and} {\sc
  Mukherjee, K.} 2012.
\newblock A game-theoretic model for the darpa network challenge.
\newblock Manuscript.

\bibitem[\protect\citeauthoryear{DARPA}{DARPA}{2009}]{red_balloon}
{\sc DARPA}. 2009.
\newblock The {DARPA} network challenge.
\newblock Available online at http://archive.darpa.mil/networkchallenge/.

\bibitem[\protect\citeauthoryear{Davis}{Davis}{2011}]{NY11}
{\sc Davis, J.} 2011.
\newblock The crypto-currency: Bitcoin and its mysterious inventor.
\newblock The New Yorker, October 10.

\bibitem[\protect\citeauthoryear{Douceur and Moscibroda}{Douceur and
  Moscibroda}{2007}]{DM07}
{\sc Douceur, J.~R.} {\sc and} {\sc Moscibroda, T.} 2007.
\newblock Lottery trees: motivational deployment of networked systems.
\newblock {\em SIGCOMM Comput. Commun. Rev.\/}~{\em 37}, 121--132.

\bibitem[\protect\citeauthoryear{Drucker and Fleischer}{Drucker and
  Fleischer}{2012}]{Drucker12}
{\sc Drucker, F.} {\sc and} {\sc Fleischer, L.} 2012.
\newblock Simple sybil-proof mechanisms for multi-level marketing.
\newblock In {\em Proceedings of the 13th ACM conference on Electronic
  commerce}. EC '12.

\bibitem[\protect\citeauthoryear{Dyagilev, Mannor, and Yom-Tov}{Dyagilev
  et~al\mbox{.}}{2010}]{Dyagilev10}
{\sc Dyagilev, K.}, {\sc Mannor, S.}, {\sc and} {\sc Yom-Tov, E.} 2010.
\newblock Generative models for rapid information propagation.
\newblock In {\em Proceedings of the First Workshop on Social Media Analytics}.
  SOMA '10. ACM, New York, NY, USA, 35--43.

\bibitem[\protect\citeauthoryear{Emek, Karidi, Tennenholtz, and Zohar}{Emek
  et~al\mbox{.}}{2011}]{Emek11}
{\sc Emek, Y.}, {\sc Karidi, R.}, {\sc Tennenholtz, M.}, {\sc and} {\sc Zohar,
  A.} 2011.
\newblock Mechanisms for multi-level marketing.
\newblock In {\em Proceedings of the 12th ACM conference on Electronic
  commerce}. EC '11. ACM, New York, NY, USA, 209--218.

\bibitem[\protect\citeauthoryear{Kleinberg and Raghavan}{Kleinberg and
  Raghavan}{2005}]{Kleinberg05}
{\sc Kleinberg, J.} {\sc and} {\sc Raghavan, P.} 2005.
\newblock Query incentive networks.
\newblock In {\em FOCS '05: Proceedings of the 46th IEEE Symposium on
  Foundations of Computer Science}. 132--141.

\bibitem[\protect\citeauthoryear{Krugman}{Krugman}{2011}]{K}
{\sc Krugman, P.} 2011.
\newblock Golden cyberfetters.
\newblock Available online at
  http://krugman.blogs.nytimes.com/2011/09/07/golden-cyberfetters/.

\bibitem[\protect\citeauthoryear{Nakamoto}{Nakamoto}{2008}]{Satoshi}
{\sc Nakamoto, S.} 2008.
\newblock Bitcoin: A peer-to-peer electronic cash system.
\newblock Available online at http://bitcoin.org/bitcoin.pdf.

\bibitem[\protect\citeauthoryear{Pickard, Pan, Rahwan, Cebrian, Crane, Madan,
  and Pentland}{Pickard et~al\mbox{.}}{2011}]{PPRCCMP11}
{\sc Pickard, G.}, {\sc Pan, W.}, {\sc Rahwan, I.}, {\sc Cebrian, M.}, {\sc
  Crane, R.}, {\sc Madan, A.}, {\sc and} {\sc Pentland, A.} 2011.
\newblock Time-critical social mobilization.
\newblock {\em Science\/}~{\em 334,\/}~6055, 509--512.

\bibitem[\protect\citeauthoryear{Reid and Harrigan}{Reid and
  Harrigan}{2011}]{Reid11}
{\sc Reid, F.} {\sc and} {\sc Harrigan, M.} 2011.
\newblock An analysis of anonymity in the {Bitcoin} system.
\newblock {\em {CoRR}\/}~{\em {arXiv}/1107.4524}.

\bibitem[\protect\citeauthoryear{Surowiecki}{Surowiecki}{2011}]{technology-review}
{\sc Surowiecki, J.} 2011.
\newblock Cryptocurrency.
\newblock Technology Review, MIT. September/October.
\newblock Available online: http://www.technologyreview.com/computing/38392/.

\end{thebibliography}

\appendix
\section*{APPENDIX}
\section{A Brief Overview of Bitcoin} \label{sec-bitcoin}

In this section we give a brief overview of the Bitcoin protocol. This is by
no means a complete description. The main purpose of this section is to help
understanding our modeling choices, and why our proposed reward schemes can
be implemented within the context of the existing protocol. The reader is
referred to \cite{Satoshi} and to the Bitcoin wiki for a complete description.

\subsection{Signing Transactions and Public Key Cryptography}
The basic setup of electronic transactions relies on public key cryptography.
When Alice wants to transfer $50$ coins to Bob, she signs a transaction using
her private key. Hence, everyone can verify that Alice herself initiated this
transaction (and not someone else). Bob, in turn, is identified as the target
of the transfer using his public key. For the money to be actually
transferred from Alice's account to Bob's account, some entity has to keep
track of the last owner of the coins, and to mark Bob as the new owner. Otherwise, Alice could ``double spend'' her money -- first transfer
the coins to Bob, then transfer the same coins again to Charlie.
Traditionally, this role was fulfilled by banks. In return, banks tended to
charge high fees, for example in international transfers.

\subsection{Agreeing on the History by Majority of Processing Power}
Bitcoin suggests a different solution to this problem. A peer to peer network
is used to validate all transactions. Nodes in the network agree on a common
history using a ``majority of processing-power mechanism''. A mechanism can
not rely on a numerical majority of the nodes, as it is relatively easy to
create additional identities in a network, for example by spoofing IP
addresses. We first describe the protocol that the network implements and
then explain why the history is accepted only if it is agreed upon by nodes
that together control a majority of the processing power.

Let us assume again that Alice wants to transfer $50$ coins to Bob. Alice
will send her signed transaction to some of the nodes in the network. Next,
these nodes will forward the transaction to all of their neighbors in the
network and so on. A node that receives the transaction first verifies that
this is a valid transaction (e.g. that the money being transferred indeed
belongs to Alice). If successful, the node adds this transaction to the block
of transactions it attempts to authorize (a block is simply a set of transactions; the specific details on the
structure of this block are omitted). To authorize a block, a node has to
solve an inverse Hash problem. More specifically, its goal is to add some
bits (nonce) to the block such that the Hash value of the new block begins
with some predefined number of zeroes.

The number of zeroes is adjusted such that the average time it takes the
network to authorize a block is fixed. The protocol uses a clever method to
aggregate transactions (a Merkle tree) which assures that the size of the
string to be hashed is also fixed. Additional information that is included in
the block is the hash of the previously authorized block. So, in fact, the
authorized blocks form a chain in which each block identifies the one the
preceded it.

When a node authorizes a block it broadcasts to the network the new block and
the proof of work (the string which is added to the block to get the desired
hash). If a node receives two different authorized blocks it adopts the one
which is part of the longer chain.

Let us argue briefly and informally why the history is determined by the
majority of the processing power (\cite{Satoshi}). That is, the probability
that a group of malicious nodes manages to change the history decreases as
the fraction of processing power they control decreases. Assume that a group
of malicious nodes, that does not have the majority of the processing power,
wants to change the chain that has currently been adopted by the other nodes.
Since the malicious nodes own less processing power, their authorization rate
is slower than the authorization rate of the majority of the network, which
is honest. Therefore, the malicious nodes will not be able to produce a
longer chain than the one that is currently accepted by the network (with
high probability). The longer chain is the one that will be adopted by the
honest nodes in the network (and will also grow longer as they continue to
extend it). In fact, once a block has been accepted into the chain, the
probability that a longer chain that displaces it will appear decreases
exponentially with time (as the chain that follows it grows longer it is
harder to manufacture a longer alternative one). See~\cite{Satoshi} for a
more formal argument.

\subsection{Transaction Fees}
To incentivize nodes to participate in the peer to peer network and invest
effort in authorizing transactions, rewards have to be allocated. Currently,
in the initial stages of the protocol this is done by giving the authorizer
of a block a fixed amount of bitcoins (this also the only method for printing
new bitcoins). This fixed amount will be reduced every few years at a rate
determined by the protocol. As the money creation is slowly phased out, there
will be a need to reward the nodes differently. The protocol has been
designed with this in mind. It already requires the transaction initiator to
specify a fee for authorizing her transaction. As we explained in the
introduction this introduces an incentive problem since nodes prefer to keep
the transaction to themselves instead of broadcasting it.

\subsection{On Implementing the Hybrid Scheme}\label{subsec-implementation}

We start with a rough description of how fees for authorizing transactions in
Bitcoin are currently implemented. Alice produces a transaction record in
which she transfers some of her coins to Bob. She specifies the fee (if any)
for authorizing this transaction in a field in the transaction record, and
then cryptographically signs this record. When Victor authorizes a
transaction, the implication is that Alice transferred some amount of money
to Bob, and some amount of money (specified in the fee field) to Victor.

Any $(\beta,\mathcal H)$-almost uniform scheme can be implemented similarly
(and thus, the Hybrid scheme can be implemented similarly). Alice produces a
different transaction record for every seed, in which she specifies $\beta$,
$\mathcal H$, and some amount of coins $f$ (we normalized $f=1$ in the
previous sections -- the total fee if a node in tree rooted by the seed
authorizes the transaction will be $(1+\beta\cdot \mathcal H)\cdot f$). We
modify the protocol so that every node that transfers the transaction to its
children cryptographically signs the transaction, and specifically identifies
the child to whom the information is being sent using that child's public
key. The transaction information, therefore includes a ``chain of custody''
for the transaction which will be different for every node that tries to
authorize the transaction. The implication of Victor authorizing a
transaction is that Alice transferred some amount of money to Bob, a fee of
$f\cdot \beta$ to every node in the path (as it appears in the authorized
transaction), and a fee of $f+\beta \cdot (\mathcal H-h+1)\cdot f$ to Victor if
the path was of length $h\leq \mathcal H$. Payments will not be awarded if
the ``chain of custody'' is not a valid chain that leads from Alice to
Victor.

\end{document}